\newtheorem{remark}{Remark}
\newtheorem{definition}{Definition}
\newtheorem{lemma}{Lemma}
\newcommand{\N}{\mathbb{N}}
\newcommand{\bx}{\boldsymbol{x}}
\newcommand{\be}{\begin{equation}}
\newcommand{\ee}{\end{equation}}
\newcommand{\ba}{\begin{array}}
\newcommand{\ea}{\end{array}}
\newcommand{\bitem}{\begin{itemize}}
\newcommand{\eitem}{\end{itemize}}
\newcommand{\benum}{\begin{enumerate}}
\newcommand{\eenum}{\end{enumerate}}
\begin{document}
\date{}

\title{Hierarchical routing control in discrete manufacturing plants via\\ model predictive path allocation and greedy path following}

\author{Lorenzo~Fagiano, 
	Marko~Tanaskovic,
	Lenin~Cucas Mallitasig,
	Andrea~Cataldo and
	Riccardo~Scattolini%
	\thanks{L. Fagiano, L. Cucas and R. Scattolini are with the Dipartimento di Elettronica, Informazione e Bioingegneria, Politecnico di Milano, Piazza Leonardo da Vinci
32, 20133 Milano, Italy.}
\thanks{M. Tanaskovic is with Singidunum University, 32 Danijelova St., Belgrade, 160622 Serbia.}
\thanks{A. Cataldo is with the Institute of Industrial Technology and Automation,
	National Research Council, Via Alfonso Corti 12, 20133 Milano, Italy.}
\thanks{Corresponding author: L. Fagiano, lorenzo.fagiano@polimi.it.}
\thanks{This research was funded by a grant from the Italian Ministry of Foreign Affairs and International Cooperation (MAECI), project ``Real-time control and optimization for smart factories and advanced manufacturing''.}}

\maketitle

\begin{abstract}
The problem of real-time control and optimization of components' routing in discrete manufacturing plants, where distinct items must undergo a sequence of jobs, is considered. This problem features a large number of discrete control inputs and the presence of temporal-logic constraints. A new approach is proposed, adopting a shift of perspective with respect to previous contributions, from a Eulerian system model that tracks the state of plant nodes, to a Lagrangian model that tracks the state of each part being processed. The approach features a hierarchical structure. At a higher level, a predictive receding horizon strategy allocates a path across the plant to each part in order to minimize a chosen cost criterion. At a lower level, a path following logic computes the control inputs in order to follow the assigned path, while satisfying all constraints. The approach is tested here in simulations, reporting extremely good performance as measured by closed-loop cost function values and computational efficiency, also with very large prediction horizon values. These features pave the way to a number of subsequent research steps, which will culminate with the experimental testing on a pilot plant.
\end{abstract}

\section{Introduction}\label{s:intro}

Manufacturing is a key strategic sector in all industrialized countries. In many product categories, a high level of automation has enabled the mass production of goods with very high throughput and quality, and low unit cost. This is one of the main building blocks of modern economies. Yet, the strong global competition, together with the combined trends of higher product customization, more agile supply chains, and higher environmental sustainability, motivate further research and development in advanced manufacturing solutions \cite{eu_roadmap,SmartManufacturing2016,SmartManufacturing2018}. This interdisciplinary research domain involves many fields, from industrial communications to collaborative robotics, from human-machine interaction to routing and logistics, leading to a large number of interesting and challenging problems \cite{8438327,REN20191343,TUPTUK201893,YAN201756}. Among the latter, we focus on the real-time control and optimization of components' routing in discrete manufacturing plants, where distinct items must undergo a sequence of jobs. Depending on the specific manufacturing process at hand, this problem may entail several requirements. The discrete parts must be routed to a number of stations, via physical lines that present handling constraints, for example in terms of limited movement speed and potential line congestion. Different lines may also merge at some points, leading to possible lockouts to be avoided. Moreover, the processing time at each station may be uncertain within some limits, and the sequence of jobs to be done on a part may be not fully known a priori, since it can change depending on the outcome of each job. For example, some parts may need reworking due to a non-satisfactory outcome of one job, or they may undergo a random quality test at certain stages of the production process. In addition, the priorities among parts can also change in real-time, for example due to the segmentation of products and the need to increase flexibility and deliver production-on-demand. Other external factors such as component unavailabilities or faults may also affect the production lines. Finally, sustainability goals translate to minimization of waste and of energy consumption.
\begin{figure}[!htb]
	\centering
	\includegraphics[width=.8\columnwidth]{./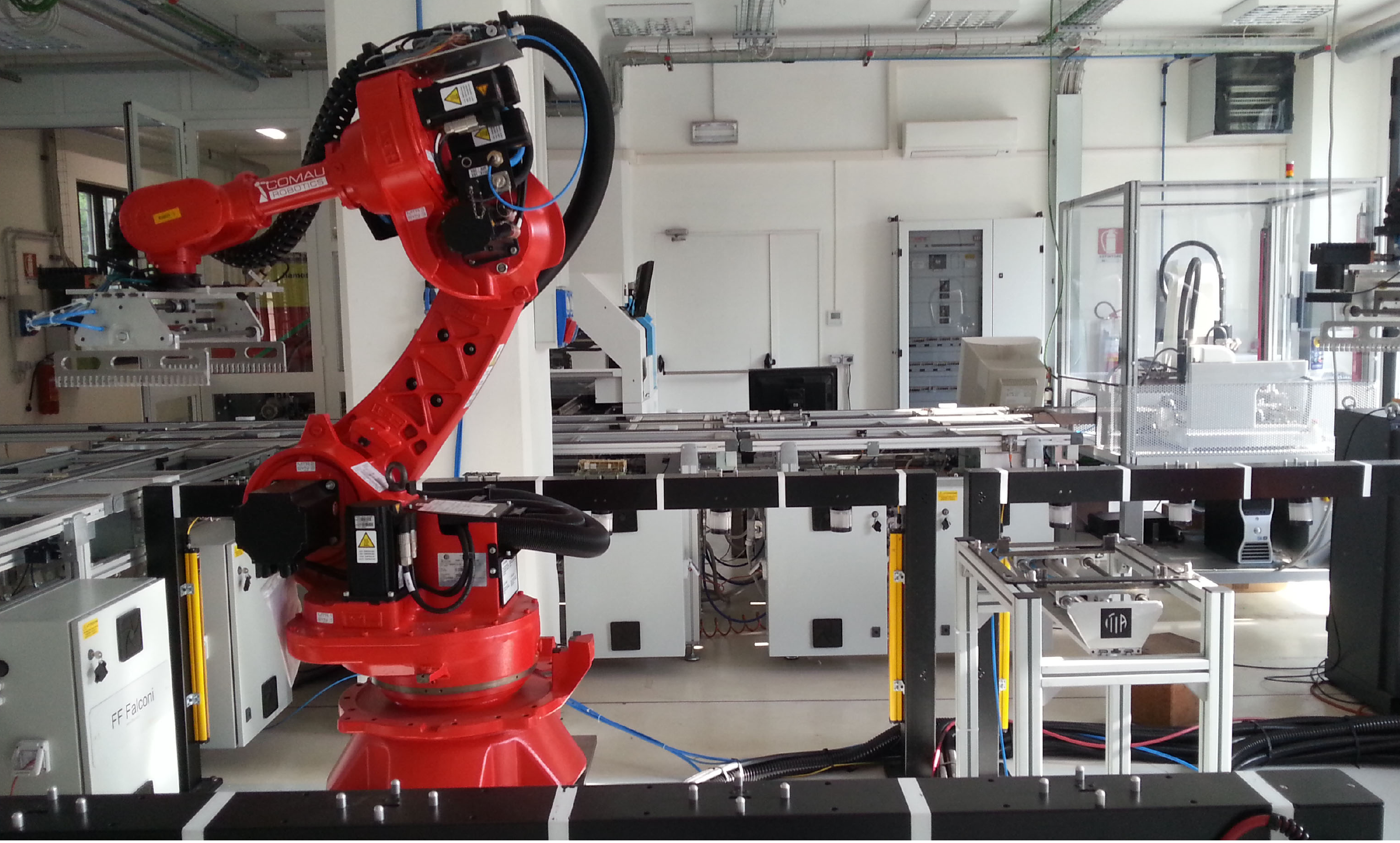}
	\caption{Laboratory de-manufacturing plant at the National Research Council in Milano (\cite{CaSc16}), showing the loading/unloading node with a manipulator, and the transportation nodes and a machine node in the background.}
	\label{F:pilot-plant}
\end{figure}

\noindent From a control engineering perspective, the mathematical transcription of this problem leads to a prohibitive large-scale integer or mixed-integer optimal control program, involving a dynamical system with discrete state variables, discrete input commands, and discrete output measurements, and subject to temporal logic constraints and external disturbance signals, where the goal is to guarantee the required throughput with minimal waste and energy cost. To tame such a complexity, hierarchical approaches are adopted: the overall problem is divided into sub-problems addressed separately, such as low level feedback control of individual machines and of segments of movement lines, computation of feasible routes, machine scheduling, etc.. Approaches in the literature that aim to address one or more of these sub-problems include rule-based techniques \cite{GuGuBe89,BYRNE1997109,SaChSi01,Bucki:jucs_21_4,Souier2010}, integer programming \cite{DAS1997237}, multi-agent architectures \cite{KoPiMe97}, short-term simulation and ordinal optimization \cite{PeCh98}, heuristic search combined with Petri nets \cite{MoYuKe02}, and model predictive control (MPC) \cite{VARGASVILLAMIL20002009,VARGASVILLAMIL200145,CATALDO201528,CaSc16,CaMS19}. In particular, in \cite{CaSc16} a receding horizon approach has been employed to control in real-time a de-manufacturing plant composed of 35 nodes (comprising either discrete movement elements or working/testing machines), accounting for temporal logic constraints and optimizing a multi-objective criterion that trades off the system throughput and the energy consumption. In an analogy with fluid modeling, this approach adopted a \textit{Eulerian description}, where the state vector includes the status of each node in the plant (which is conceptually similar to a control volume in fluid dynamics). The resulting control policy, experimentally tested on a laboratory setup (shown in Fig. \ref{F:pilot-plant}), has the merit of providing the optimal solution to the finite horizon routing problem at each time step. However, the drawback of this approach is the rapid increase of computational complexity with the number of prediction steps and of nodes in the plant, which limits its application to a relatively short prediction horizon and small system size.\\
This paper presents the first accomplished step of a research project aimed to improve over the results of \cite{CaSc16} in terms of scalability, while still satisfying the same, demanding temporal logic constraints. The main contribution presented here is a new approach to address the real-time routing problem,  with two novelties: 1) a shift of perspective from a  Eulerian  to a \textit{Lagrangian description}, where the system state includes the status of each part that must be routed in the plant, instead of each node; and 2) a hierarchical MPC structure, where the receding horizon strategy allocates a path to each part (as well as the part's position on the path) and a lower-level logic computes the control inputs in order to follow the assigned path. We tested the new approach in simulation and report extremely good performance as measured by closed-loop cost function values and computational efficiency, also with very large prediction horizon values. These features pave the way to a number of subsequent research steps, which will culminate with the experimental testing on the pilot plant of Fig. \ref{F:pilot-plant}. 


\section{Eulerian system model\\and problem description}\label{s:problem_statement}
We consider a discrete manufacturing plant composed of a finite number $N_n\in\N$ of nodes. At each discrete time instant $k$, each node $h=1,\ldots,N_n$ may be empty or it may host one (and only one) part being processed. For a reference, consider the diagram of Fig. \ref{F:example-scheme} representing the small-scale system that we use in this paper to test the proposed approach, composed of 12 nodes. A more complex diagram representing the laboratory testbed at the National Research Council in Milano, with 35 nodes, can be found in \cite{CaSc16}.\\
\begin{figure}[hbt!]
	\centering
	\includegraphics[width=.7\columnwidth]{./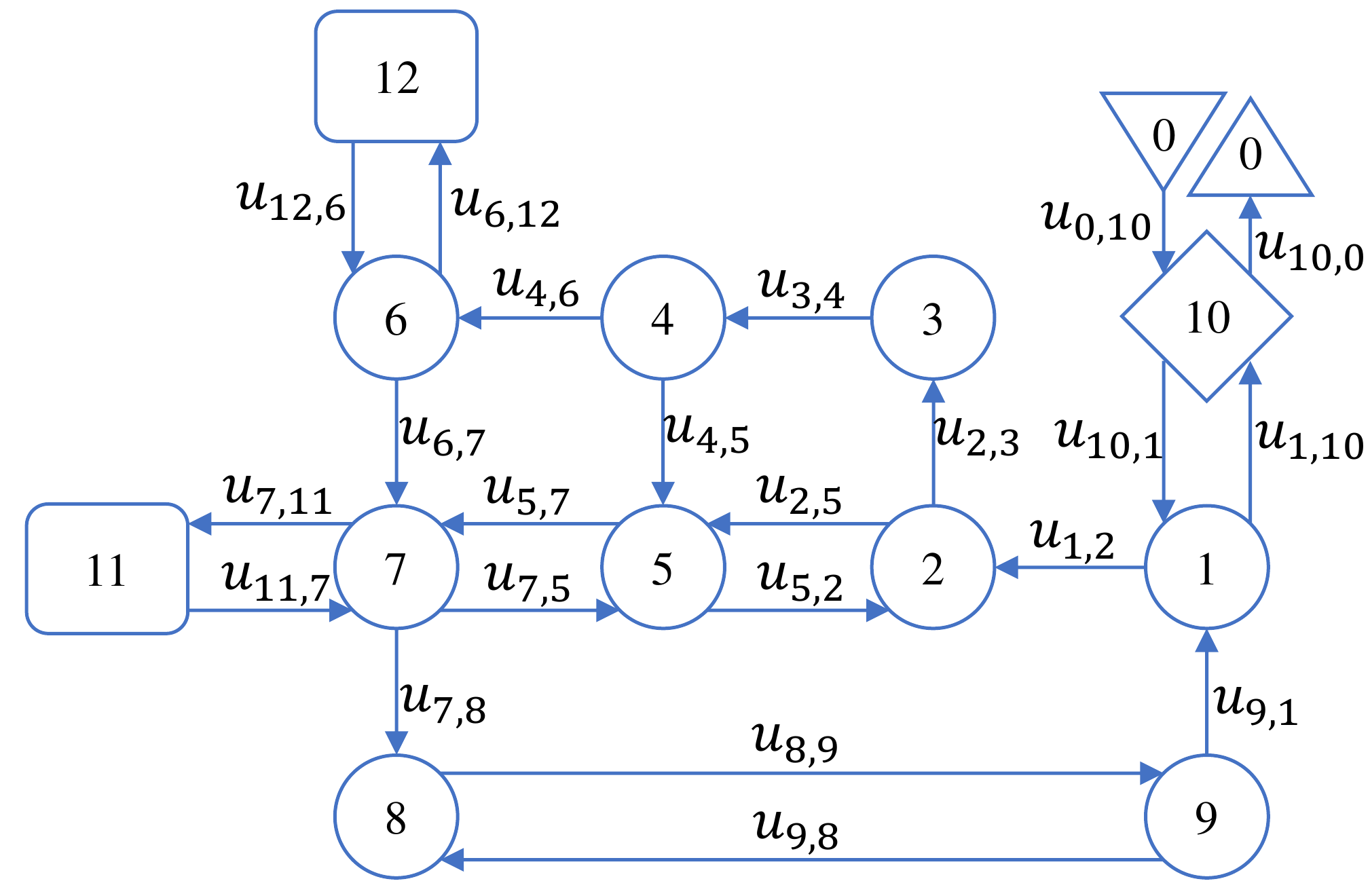}
	\caption{Small-scale system considered as test case in this paper. Node 10 is both the loading and unloading one (i.e., $h_l=h_u=10$) and nodes 11,12 are machines (i.e., $\mathcal{M}=\{11,12\}$).}\label{F:example-scheme}
\end{figure}
The boolean variable $z_h(k)\in\{0,1\}$ indicates whether a part is present at node $h$ (i.e., $z_h(k)=1$) or not. We assume that $N_t$ out of $N_n$ nodes are \textit{transportation modules}, and the remaining $N_m=N_n-N_t$ are \textit{machines}. In particular, let us denote the set of indexes of machine nodes as
\[
\mathcal{M}=\{h:\text{node $h$ is a machine}\}.
\]
Each node (be it a transportation module or a machine) is able to either hold one part in place, or to move it to a fixed number of specific, directly connected nodes according to the plant topology (see, e.g., Fig. \ref{F:example-scheme}).  Each machine must, in addition, execute a specific job on each part it receives. Without loss of generality we assume that a movement from one node to a connected one lasts one time step (direct movement), while the job carried out by a machine $m$ lasts an integer number $L_m\geq1$ of time steps. The boolean control signal $u_{h,j}(k)\in\{0,1\}$ dictates whether a part will move from node $h$ at time $k$ to node $j$ at time $k+1$. Finally, we also assume that two special nodes are present, the \textit{loading} node and the \textit{unloading} one, with indexes $h_l$ and $h_u$, respectively. These nodes are the interface between the plant under study and the outside, denoted with index 0, through two control variables:  $u_{0,h_l}$ can move to the loading node a part from outside the plant, e.g. from a buffer containing the incoming parts that must be processed, while a part can be moved from the unloading node to the outside via the control variable $u_{h_u,0}$, e.g. to a buffer of finished parts. We denote with $N_f(k)$ the total number of finished parts at time $k$. In summary, the number $N_u$ of boolean control signals to be computed at each time step is equal to the number of valid direct transitions among the nodes, plus the two loading and unloading commands $u_{0,h_l},\,u_{h_u,0}$. We collect these inputs into a column vector, denoted with $U(k)\in\{0,1\}^{N_u}$.

For each node $h=1,\ldots,N_n$, we define the following sets.
\begin{definition}\label{D:outgoing-incoming-sets} (Outgoing and Incoming sets)
	\bitem
	\item The \textit{outgoing set} $\mathcal{O}_h$ is the set containing the indexes of all nodes that can be reached directly from $h$, including possibly the outside, i.e. $\mathcal{O}_h=\{j:\,\exists\, u_{h,j}\}$;
	\item The \textit{incoming set} $\mathcal{I}_h$ is the set containing the indexes of all nodes for which $h$ is a direct destination, including possibly the outside, i.e. $\mathcal{I}_h=\{j:\,\exists\, u_{j,h}\}$.
	\eitem
\end{definition}
We thus have $\{0\}\in\mathcal{O}_{h_u}$ and $\{0\}\in\mathcal{I}_{h_l}$. Defining $\boldsymbol{z}=[z_1,\ldots,z_{N_n}]^T$ ($\cdot^T$ is the vector transpose operation) and
\[
\boldsymbol{v}(k)=\left[
\ba{c}
\sum\limits_{j\in\mathcal{I}_1}u_{j,1}(k)-\sum\limits_{j\in\mathcal{O}_1}u_{1,j}(k)\\
\vdots\\
\sum\limits_{j\in\mathcal{I}_{N_n}}u_{j,N_n}(k)-\sum\limits_{j\in\mathcal{O}_{N_n}}u_{N_n,j}(k)
\ea
\right]
\]
we can introduce the following linear model describing the plant's behavior:
\be\label{eq:model-plant}
\ba{rcrll}
\boldsymbol{z}(k+1)&=&\boldsymbol{z}(k)&+&\boldsymbol{v}(k)\\
N_f(k+1)&=&N_f(k)&+&u_{h_u,0}(k)
\ea
\ee
This model corresponds to a Eulerian description of the system, where the nodes are taken as control volumes, the system state corresponds to the number of parts in each of these volumes, and the model essentially corresponds to a series of mass conservation equations. To keep consistency with the real system, the boolean control inputs must comply with the following operational constraints at all time steps:
\begin{subequations}\label{eq:input-constr-always}
\begin{gather}
\sum\limits_{j\in\mathcal{O}_h}u_{h,j}(k)\leq1,\,h=1,\ldots,N_n\label{eq:input-constr-always-outgoing}\\
\sum\limits_{j\in\mathcal{I}_h}u_{h,j}(k)\leq1,\,h=1,\ldots,N_n\label{eq:input-constr-always-incoming}\\
\sum\limits_{j\in\mathcal{O}_h}u_{h,j}(k)=0,\,\forall h: z_h(k)=0\label{eq:input-constr-always-empty}\\
\sum\limits_{j\in\mathcal{I}_h}u_{j,h}(k)=0,\,\forall h: z_h(k)=1\land \sum\limits_{j\in\mathcal{O}_h}u_{h,j}(k)=0  \label{eq:input-constr-always-hold}
\end{gather}
\end{subequations}
Constraints \eqref{eq:input-constr-always-outgoing}-\eqref{eq:input-constr-always-incoming} impose that a part shall move to at most one destination from node $h$, and that only one part shall reach node $h$ at the next time step. Constraint \eqref{eq:input-constr-always-empty} states that all control signals from an empty node shall be zero, finally constraint \eqref{eq:input-constr-always-hold} imposes that no part can move to node $h$ if the latter is occupied and it will hold its current part in the next step.\\
\noindent Moreover, temporal logic constraints on the control inputs pertaining to machine nodes arise, due to the fact that once a job is started it must be completed before the part can be moved. Denoting with $k_m$ the time when a new job is started by machine $m$, such constraints take the form:
\be\label{eq:input-temporal-logic}
\sum\limits_{j\in\mathcal{O}_m}u_{m,j}(k)=0,\,\forall k\leq k_m+L_m,\forall m\in\mathcal{M}:z_m(k)=1
\ee
The problem we address can be described as follows: derive a control policy that computes, at each time instant $k$, all of the control variables $u_{h,j}$ in order to satisfy the operational constraints \eqref{eq:input-constr-always}-\eqref{eq:input-temporal-logic} and to minimize a suitably defined cost criterion.\\
In \cite{CaSc16}, this problem has been addressed resorting to MPC, after a suitable manipulation of the model and of the constraints that leads to a mixed logical dynamic (MLD) formulation and a large-scale mixed-integer linear program to be solved at each time step. The considered cost criterion was a weighted sum of terms that penalize the permanence of parts in the plant (thus encouraging a higher throughput) and the energy consumption as measured by a number of non-zero control inputs (which correspond to a physical movement of a part in the plant). The approach has been tested experimentally with good performance, however it suffers from the high computational complexity due to the large number of auxiliary integer and continuous variables that need to be introduced in the MLD reformulation. To give an example, in the small test case of Fig. \ref{F:example-scheme}, 160 integer auxiliary variables need to be introduced per each time step in the prediction horizon (e.g., with a 5-time-steps horizon about 800 integer variables are used).\\
The approach introduced in this paper, presented next, aims to overcome this issue by taking a different perspective on the problem at hand.

\section{Lagrangian system model\\and proposed approach}\label{s:approach}
To reduce the computational complexity while still retaining an optimization-based predictive approach, 
 we propose here the hierarchical control structure presented in Fig. \ref{F:hierarchical}:
\begin{figure}[hbt!]
	\centering
	\includegraphics[width=.4\columnwidth]{./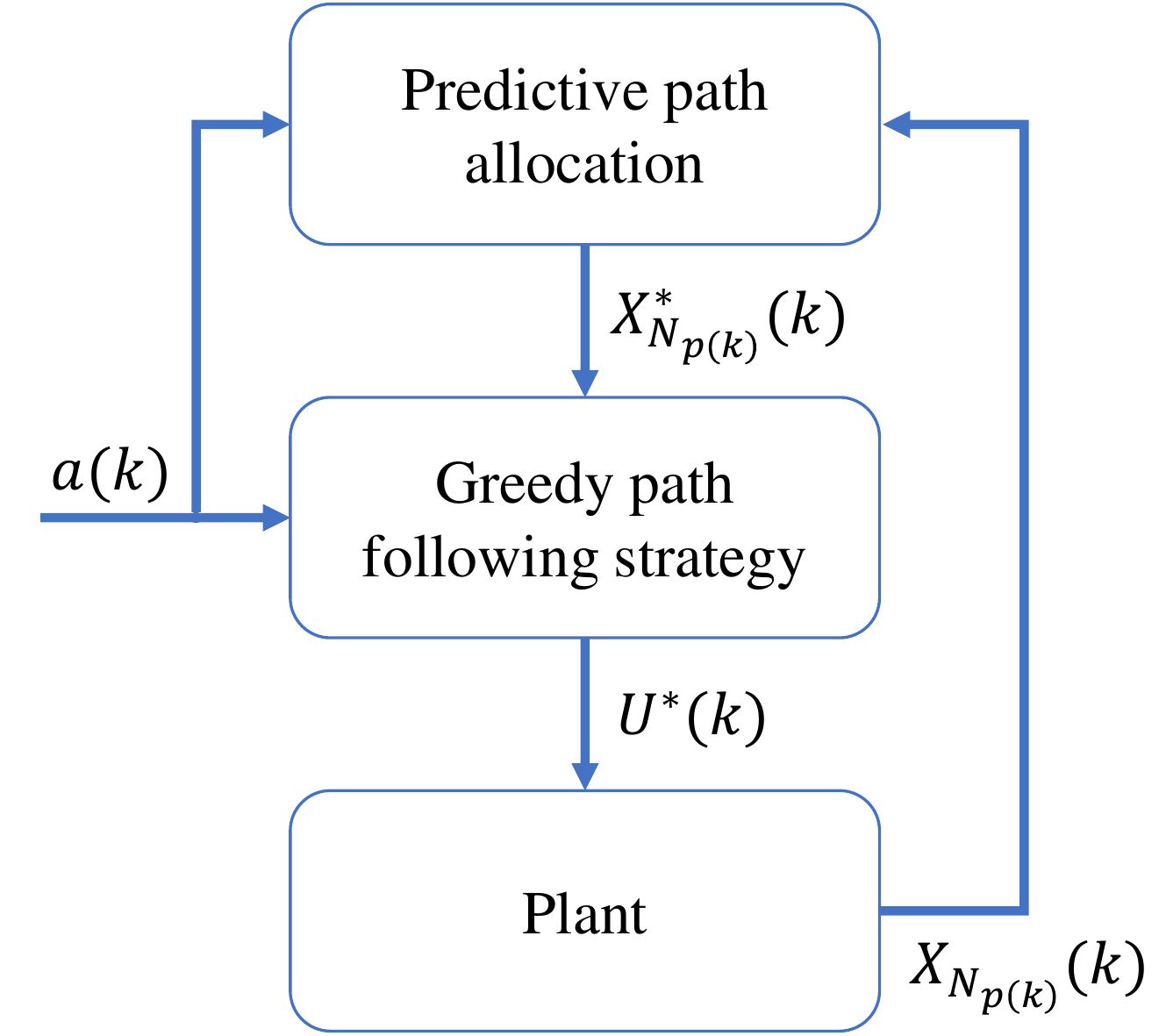}
	\caption{Hierarchical control approach proposed in this paper.}\label{F:hierarchical}
\end{figure}
\bitem
\item a \textit{low-level greedy path following strategy} is in charge to compute feasible control inputs and to move forward each part along its assigned path;
\item a \textit{high-level model predictive path allocation} module solves a finite horizon optimal control problem (FHOCP), where the decision variables are the sequences that each part shall follow, as well as their position in such sequences. Thus, the predictive controller can shift parts forward and backward on the various paths and move them from a  path to another one, as long as consistency with their physical location is maintained.
\eitem
As suggested by the adopted terminology, in this new approach we follow the parts' trajectories instead of keeping track of the status of each control volume, i.e. we adopt a Lagrangian description of the plant instead of an Eulerian one. Consistency/translation between the two descriptions is provided by the fact that each path is a sequence of nodes, thus the position of a part on a given path unequivocally identifies the node where that part is located. In the next sections, we present in detail the hierarchical approach by describing the two elements listed above.
\subsection{Lagrangian model state}
Let us denote with $i=1,\ldots,N_p(k)$ an index that identifies each one of the $N_p(k)\in\N$ parts in the plant at time $k$. The value of $N_p(k)\in\N$ can change over time as new parts enter the plant and/or finished ones exit. We further denote with $S=\{1,2,\ldots,N_s\}$ a set of integers, each one corresponding unequivocally to a \textit{sequence} (or path). Such sequences are assumed to be precomputed and stored: sequence generation and selection methods are not addressed in this paper, yet we will briefly comment on this aspect later on in Remark \ref{r:sequences}. For each $s\in S$, the operator $\mathcal{S}(s)$ returns the actual sequence corresponding to index $s$. Each sequence $\mathcal{S}(s)$ has the following structure:
\be\label{eq:sequence-structure}
\mathcal{S}(s)=\left\{\left[\ba{c}h_1\\g_1\ea\right],\ldots,\left[\ba{c}h_p\\g_p\ea\right],\ldots,\left[\ba{c}h_{N_s}\\g_{N_s}\ea\right]\right\}
\ee
where $N_s$ is the sequence length, $p=1,\ldots,N_s$ is the position along the sequence, and $h_p,\,g_p$ are integers corresponding to nodes in the plant. In particular, each value of $h_p$ corresponds to a node that is either equal to $h_{p+1}$ (i.e. the part shall be held), or directly connected to $h_{p+1}$ (i.e., the part shall be moved from node $h_p$ to $h_{p+1}$), while each value of $g_p$ is the index of a node chosen as goal for that part of the sequence. Usually, such goal indexes correspond to machines or to the outside (node 0).\\
We indicate with $s_i(k)\in S$ the sequence that part $i$ is following at time $k$, with $p_i(k)\in\N$ the position of part $i$ along such a sequence, and with $\mathcal{S}(s_i(k))^{(1,p_i(k))},\,\mathcal{S}(s_i(k))^{(2,p_i(k))}$ the first and second entry, respectively, of the vector in position $p_i(k)$ of sequence $\mathcal{S}(s_i(k))$ (compare \eqref{eq:sequence-structure}). For example, referring to Fig. \ref{F:example-scheme}, the sequence identified by index $s=1$ could correspond to:
\[
\ba{rcl}
\mathcal{S}(1)&=&\left\{\left[\ba{c}10\\12\ea\right],\left[\ba{c}1\\12\ea\right],\left[\ba{c}2\\12\ea\right],\left[\ba{c}3\\12\ea\right],\left[\ba{c}4\\12\ea\right],\right.\\&&\;\;\;
\left[\ba{c}6\\12\ea\right],\left[\ba{c}12\\12\ea\right],\left[\ba{c}12\\0\ea\right],\left[\ba{c}6\\0\ea\right],\left[\ba{c}7\\0\ea\right],\\&&\;\;\,
\left.\left[\ba{c}8\\0\ea\right],\left[\ba{c}9\\0\ea\right],\left[\ba{c}1\\0\ea\right],\left[\ba{c}10\\0\ea\right]\right\},
\ea
\]
and a part $i$ with $s_i(k)=1$ and $p_i(k)=3$ would be located at node $h=2$ at time $k$, i.e. $\mathcal{S}(s_i(k))^{(1,p_i(k))}=2$ and have as goal the machine node $\mathcal{S}(s_i(k))^{(2,p_i(k))}=12$. Moreover, we denote with $\underline{k}_i$ the time step when part $i$ appeared on the plant, and with $t_i(k)$ the time elapsed since then:
\be\label{eq:time-elapsed}
t_i(k)=k-\underline{k}_i.
\ee
Then, in our Lagrangian model the state of part $i$ is given by:
\be\label{eq:part-state}
\bx_i(k)=\left[\ba{c}s_i(k)\\p_i(k)\\t_i(k)\ea\right].
\ee
Finally, we denote with
\be\label{eq:priority}
r_i(k)=\text{card}(\mathcal{S}(s_i(k)))-p_i(k)
\ee
the number of remaining nodes that part $i$ shall visit to complete its current sequence. Variable $r_i(k)$ is thus a function of the state $\bx_i(k)$.\\
For later use, we also collect all the state variables in vector
\be\label{eq:lagr-state}
X_{N_p(k)}(k)=[\bx_1(k)^T,\ldots,\bx_{N_p}(k)^T]^T\in\N^{3N_p},
\ee
which represents the overall state of the Lagrangian plant model. Note that such a state vector can change dimension in time as it depends on the value of $N_p(k)$ (which we denote with the subscript $\cdot_{N_p(k)}$ in \eqref{eq:lagr-state}). Albeit rather unusual in dynamical models, this feature does not lead to any technical problem as long as consistency with the Eulerian model is ensured. In turn, this is obtained by always applying feasible inputs to the plant, as achieved by our path following algorithm, introduced next.
\subsection{Greedy path following strategy and closed-loop Lagrangian model}\label{ss:path-following}
The greedy path following strategy is a rule-based controller that acts according to the following principles: a) if possible, move each part forward in its current sequence; b) if the next node in the sequence is blocked, wait; c) if a potential conflict among parts is detected, the part with smallest $r_i(k)$ value shall move, and the other ones shall wait. To account for new parts that must be loaded to the plant from the outside, we introduce the boolean $a(k)$, which is equal to 1 when such a new part is available to be moved to the loading node.\\
$\,$\\
\textbf{Algorithm 1} \textit{Greedy path following strategy}. At each time step $k$:
\benum
\item[\textbf{1.}] Compute $r_i(k)$, $i=1,\ldots,N_p(k)$ according to \eqref{eq:priority};
\item[\textbf{2.}] Compute the one-step-ahead predicted states $\hat{\bx}_i,\,i=1,\ldots,N_p(k)$, by forward-propagation of all parts along their current paths:
\be\label{eq:forward-prop}
\ba{rcl}
\hat{p}_i(k+1)&=&p_i(k)+1\\
t_i(k+1)&=&t_i(k)+1\\
\hat{\bx}_i(k+1)&=&\left[\ba{c}s_i(k)\\\hat{p}_i(k+1)
\\t_i(k+1)\ea\right]
\ea
\ee
\item[\textbf{3.}] For each node $h=1,\ldots,N_n$, compute the number of potential conflicts $n_h(k+1)$ as:
\[
\ba{c}
n_h(k+1)=\sum\limits_{i=1}^{N_p(k)}c(\hat{\bx}_i(k+1),h)-1,\\\text{where}\\
c(\hat{\bx}_i(k+1),h)=
\left\{
\ba{l}1\text{ if }\mathcal{S}(1,s_i(k))^{(\hat{p}_i(k+1))}=h\\
0\text{ otherwise}
\ea
\right\};
\ea
\]
\item[\textbf{4.}] \verb|If| $n_h(k+1)=0$ for all $h$, \verb|then| go to step \textbf{5.}.\\\verb|Else|, for each node $h:n_h(k+1)>0$ do conflict resolution:
\benum
\item[\textbf{4.a.}] Compute the set containing the indexes of conflicting parts:
\[
\mathcal{C}_h(k)=\{i:c(\hat{\bx}_i(k+1),h)=1\}
\]
\item[\textbf{4.b.}] Check if a part is being held at node $h$:
\[
\ba{rcl}
\bar{i}_h(k)&=&\{i\in\mathcal{C}_h(k):\mathcal{S}(\hat{s}_{i}(k+1))^{(1,\hat{p}_{i^*}(k+1))}\\&&\;\;\;=\mathcal{S}(s_{i}(k))^{(1,p_{i}(k))}=h\}
\ea
\]
\item[\textbf{4.c.}] Compute the set of parts that are most advanced in their own path:
\[
\mathcal{L}_h(k)=\left\{i:r_i(k)=\min\limits_{l\in\mathcal{C}_h(k)}r_l(k)\right\}
\]
\item[\textbf{4.d.}] Compute the index $i_h^*(k)$ of the part with highest priority:\\
\verb|If| $\bar{i}_h(k)\neq\emptyset$ \verb|then| $i_h^*(k)=\bar{i}_h(k)$\\
\verb|Elseif| $\text{card}(\mathcal{L}_h(k))=1$ \verb|then| $i_h^*(k)=\mathcal{L}_h(k)$\\
\verb|Else| $i_h^*(k)=\arg\min\limits_{l\in\mathcal{L}_h(k)}t_l(k)$.\\
$\,$
\item[\textbf{4.e.}] $\forall i\in\mathcal{C}_h(k):i\neq i_h^*(k),$ correct the corresponding one-step-ahead predicted state as:
\be\label{eq:forward-prop-correct}
\ba{rcl}
\hat{p}_i(k+1)&=&p_i(k)\\
\hat{\bx}_i(k+1)&=&\left[\ba{c}s_i(k)\\\hat{p}_i(k+1)\\
t_i(k+1)\ea\right].
\ea
\ee
\item[\textbf{4.f.}] Go to \textbf{3.}
\eenum
\item[\textbf{5.}] Apply to the plant the following inputs, corresponding to the computed part movements: 
\[
\ba{l}
\forall h,j:\exists u_{h,j}\land h \neq 0\\
\ba{rcl}
u_{h,j}(k)&=&1\text{, if } \exists i:\mathcal{S}(s_{i}(k))^{(1,\hat{p}_{i}(k+1))}=j\\
&&\;\;\;\;\;\;\;\;\;\;\;\;\;\;\land\;\mathcal{S}(s_{i}(k))^{(1,p_{i}(k))}=h\\
u_{h,j}(k)&=&0\text{, else.}
\ea\\
\,\\
\ba{rcl}
u_{0,h_l}(k)&=&1\text{, if } a(k)=1\;\land\\
&&\;\;\;\;\;\;\;\nexists i:\mathcal{S}(s_{i}(k))^{(1,\hat{p}_{i}(k+1))}=h_l\\
u_{0,h_l}(k)&=&0\text{, else.}
\ea
\ea
\]
\eenum
$\,$\\
Step \textbf{4.d.} of \textbf{Algorithm 1} sets the priority as follows: a part being held at a node has the highest priority, if no part is held then the one that is most advanced in its own sequence (i.e. minimal $r_i(k)$) has the second-highest priority, if more than one part has minimal $r_i(k)$ then the one that has been in the plant for the longest time has the third-highest priority. Assuming that only one new part can enter the plant at each time (e.g., if there is only one loading node), this guarantees that eventually only one part is selected and advanced. If more than one loading node exist, then another condition (e.g. based on part number) can be easily implemented to sort out possible ambiguities.
\begin{lemma}\label{lemma:feasibility} (Recursive feasibility of \textbf{Algorithm 1}).
	Assume that at a given time $\underline{k}$ at most one part is present at each node, and that for any index $s\in S$, if the corresponding sequence $\mathcal{S}(s)$ includes a machine node $m\in\mathcal{M}$ at some position $p$, and a node $h\neq m$ at position $p-1$, then such a machine node appears at least $L_m$ times consecutively, i.e. inside $\mathcal{S}(s)$ there is a subsequence
	\[
	\left[\ba{c}m\\g_p\ea\right],\left[\ba{c}m\\g_{p+1}\ea\right],\ldots,\left[\ba{c}m\\g_{p+v}\ea\right]
	\]
	with $v\geq L_m$.
	Then, the inputs computed by \textbf{Algorithm 1} satisfy the constraints \eqref{eq:input-constr-always}-\eqref{eq:input-temporal-logic} for all $k\geq\underline{k}$.
\end{lemma}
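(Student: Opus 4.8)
The plan is to argue by induction on the time index $k\geq\underline{k}$, carrying as induction hypothesis the physical-consistency invariant that at most one part occupies each node $h=1,\ldots,N_n$ (equivalently $z_h(k)\in\{0,1\}$). This invariant holds at $k=\underline{k}$ by assumption, and it is exactly what makes the Eulerian update \eqref{eq:model-plant} well posed. For the inductive step I would establish two facts simultaneously: (i) the inputs returned by \textbf{Algorithm 1} at time $k$ satisfy \eqref{eq:input-constr-always}--\eqref{eq:input-temporal-logic}; and (ii) the resulting configuration at $k+1$ again places at most one part per node, so the hypothesis propagates. Feasibility for all $k\geq\underline{k}$ then follows.

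The outgoing and empty-node constraints are immediate. By the induction hypothesis each node $h$ hosts at most one part, which after step \textbf{2.} has a single predicted successor; since step \textbf{5.} sets $u_{h,j}(k)=1$ only when some part currently at $h$ is predicted to move to $j$, the sum $\sum_{j\in\mathcal{O}_h}u_{h,j}(k)$ cannot exceed one, giving \eqref{eq:input-constr-always-outgoing}, and it vanishes whenever $h$ is empty, giving \eqref{eq:input-constr-always-empty}.

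The core of the argument concerns the conflict-resolution loop \textbf{3.}--\textbf{4.}. First I would prove that it terminates: forward propagation \eqref{eq:forward-prop} is executed only once, in step \textbf{2.}, while every correction \eqref{eq:forward-prop-correct} can only turn an \emph{advancing} part into a \emph{held} one and never the reverse, so the number of advancing parts is nonincreasing and strictly decreases each time a conflict is resolved; being bounded below by zero, the loop halts in at most $N_p(k)$ passes. At termination $n_h(k+1)\leq0$ for every $h$, i.e.\ at most one part is predicted into each node. This single fact yields both the incoming constraint \eqref{eq:input-constr-always-incoming} and the preservation of the one-part-per-node invariant at $k+1$. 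The loading command is handled separately: $u_{0,h_l}(k)$ is set to one only when no part is predicted into $h_l$, so it can neither violate \eqref{eq:input-constr-always-incoming} at $h_l$ nor overfill that node. For the hold constraint \eqref{eq:input-constr-always-hold} I would use step \textbf{4.b.}: a part whose current node equals its predicted node $h$ is detected through $\bar{i}_h(k)$ and granted top priority in \textbf{4.d.}, so no other part is ever admitted into an occupied node that retains its part, whereas a node that is simultaneously vacated (its occupant has a nonzero outgoing input) imposes no such restriction, consistently with the guard in \eqref{eq:input-constr-always-hold}.

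Finally, the temporal-logic constraint \eqref{eq:input-temporal-logic} is where the standing assumption on sequences enters, and I expect this to be the main obstacle together with the termination bookkeeping above. An outgoing input from a machine $m$ is produced by step \textbf{5.} only when a part advances from a position whose node is $m$ to a position whose node differs from $m$, i.e.\ only at the last node-$m$ entry of the machine block. Because the assumption guarantees at least $L_m$ consecutive node-$m$ positions starting from the entry point, and because \eqref{eq:forward-prop} advances a part by at most one position per step (possibly delayed by holds), the part must remain at $m$ for at least $L_m$ steps after its job starts at $k_m$. The delicate part is the index accounting that relates $k_m$, the number of position advances needed to exit the machine block, and the extra steps possibly incurred by holds, so as to show that the first admissible outgoing movement indeed occurs only after the horizon $k_m+L_m$ required by \eqref{eq:input-temporal-logic}; since holds can only postpone this movement, once the fastest (hold-free) case is checked the general case follows.
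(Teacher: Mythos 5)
Your proposal takes essentially the same route as the paper's own proof, which is explicitly only a sketch: the conflict-resolution logic of \textbf{Algorithm 1} is what enforces the operational constraints \eqref{eq:input-constr-always}, the assumption that machine nodes are repeated at least $L_m$ times consecutively in every sequence is what enforces the temporal-logic constraint \eqref{eq:input-temporal-logic}, and induction propagates feasibility to all $k\geq\underline{k}$. The extra details you supply --- termination of the loop \textbf{3.}--\textbf{4.} via the monotone decrease of advancing parts, propagation of the one-part-per-node invariant, the priority given to held parts in step \textbf{4.b.}--\textbf{4.d.}, and the separate handling of $u_{0,h_l}$ --- are consistent elaborations of that same argument rather than a different approach.
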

\begin{proof}
A sketch of the proof is provided for the sake of compactness. At time $\underline{k}+1$, constraints \eqref{eq:input-constr-always} are satisfied by the conflict-resolution logic of \textbf{Algorithm 1}, which always ends with at most only one incoming part at each node, except for nodes where a part is being held. Constraint \eqref{eq:input-temporal-logic} is satisfied by the assumption that each sequence containing a machine node $m$ features that node repeated consecutively for at least $L_m$ positions, which results in a part being held at least $L_m$ time steps in machine $m$. Feasibility of the inputs at all time steps $k>\underline{k}+1$ is obtained by induction.
\end{proof}
The feedback control policy defined by \textbf{Algorithm 1} corresponds to a set of functions $\kappa_{N_p},\,N_p\in\N$. Each one of these functions pertains to a specific number of parts $N_p$ and its input arguments are the corresponding Lagrangian state $X_{N_p(k)}$ and signal $a(k)$, while the output of all of them is a vector of plant commands $U\in\{0,1\}^{N_u}$ (see step \textbf{5.} of \textbf{Algorithm 1}):
\be\label{eq:fb-greedy}
U(k)=\kappa_{N_p(k)}(X_{N_p(k)}(k),a(k)).
\ee
As shown in Lemma \ref{lemma:feasibility}, such a control policy generates inputs that are always feasible under a rather mild assumption, since the sequences $\mathcal{S}(s),\,s\in S,$ are selected/computed by the designer, who can easily enforce the property required by Lemma \ref{lemma:feasibility}. However, input feasibility by itself does not prevent the controlled system from running into a lockout, and in general the greedy path following approach can give suboptimal behavior with respect to the performance criteria of interest. On the other hand, when combined with the model \eqref{eq:model-plant}, \textbf{Algorithm 1} allows one to predict the system behavior without having to explicitly enforce the challenging constraints \eqref{eq:input-constr-always}-\eqref{eq:input-temporal-logic} and at extremely low computational cost. We exploit such closed-loop predictions in a high-level MPC strategy, described in the next Section.\\
Before proceeding further, we also introduce the 
closed loop Lagrangian model of the system, provided by the following algorithm.\\
$\,$\\
\textbf{Algorithm 2} \textit{Closed-loop Lagrangian plant model}. At each time step $k$:
\benum
\item[\textbf{1.}] Run \textbf{Algorithm 1} with the current values of $X_{N_p(k)}(k)$ and $a(k)$ as inputs, collect all the resulting values of $\hat{\bx}_i(k+1)$ and $u_{h,j}(k),\,\forall(h,j):\exists u_{h,j}$;
\item[\textbf{2.}] Compute the Lagrangian state dimension $N_p(k+1)$ as
\[
N_p(k+1)=N_p(k)+u_{0,h_l}-u_{h_u,0};
\] 
\item[\textbf{3.}] If $u_{0,h_l}=1$, generate the state $\tilde{\bx}(k+1)$ of the new part that will be loaded to the plant at time $t+1$;
\item[\textbf{4.}] For all $i:\mathcal{S}(s_{i}(k))^{(1,\hat{p}_{i}(k+1))}\neq h_l$, compute the state $\bx_i(k+1)=\hat{\bx}_i(k+1)$.
\item[\textbf{5.}] Compute the Lagrangian state $X_{N_p(k+1)}(k+1)$ by stacking all vectors $\bx_i(k+1)$ computed at step \textbf{4.} and, if available, vector $\tilde{\bx}(k+1)$ computed at step \textbf{3.}. Set $k=k+1$ and go to \textbf{1.}.
\eenum
$\,$\\
The state initialization of a new part at step \textbf{3.} can be done by assigning a sequence $s\in S$ and position $p$ to it (typically, but not necessarily, $p=1$), and by setting the third state equal to zero (compare \eqref{eq:time-elapsed}-\eqref{eq:part-state}). Since their state value is not updated at step \textbf{4.}, parts that are unloaded from the plant naturally disappear from the Lagrangian model.\\
Similarly to the control policy \eqref{eq:fb-greedy}, the system model defined by \textbf{Algorithm 2} also corresponds to a set of functions, $f_{(N_p^+,N_p)}:3\N^{N_p}\rightarrow 3\N^{N_p^+}$, each one pertaining to a specific pair of part quantities, i.e. those at the current and at next time steps, while the signal $a(k)$ is an exogenous input:
\be\label{eq:closed-loop-lagrangian-model}
X_{N_p(k+1)}(k+1)=f_{(N_p(k+1),N_p(k))}(X_{N_p(k)}(k),a(k)).
\ee
Equation \eqref{eq:closed-loop-lagrangian-model} highlights the fact that the Lagrangian model describes the motion of the parts, whose number can increase or decrease from one step to the next depending on the number of newly loaded parts and of unloaded ones.

\subsection{Model predictive path allocation}\label{ss:path-allocation}
At each time step, the predictive control logic chooses whether to keep each part on its current path $s_i(k)$ and at its current position $p_i(k)$, or to change one or both of these elements in order to optimize the predicted plant performance. The result is a dynamic, optimization-based path allocation strategy that can exploit very large prediction horizon values, thus guaranteeing the absence of lockouts, and allows one to easily consider different performance indexes and to generally improve the plant behavior with respect to the one obtained by the greedy path following policy alone.\\
At each time step $k$, let us consider the following sets $\mathcal{X}_i(k),\,i=1,\ldots,N_p(k)$:
\begin{subequations}
\begin{align}\label{eq:opt-domain-nodes}
&\text{if } \mathcal{S}(s_i(k))^{(1,p_i(k))}\notin\mathcal{M}:\nonumber\\
&\mathcal{X}_i(k)=\left\{\ba{l}(s,p)\in S\times\N:\\\mathcal{S}(s)^{(1,p)}=\mathcal{S}(s_i(k))^{(1,p_i(k))}\\
\land\,\mathcal{S}(s)^{(2,p)}=\mathcal{S}(s_i(k))^{(2,p_i(k))}\ea\right\}\\
\,\nonumber\\
&\text{else if } \mathcal{S}(s_i(k))^{(1,p_i(k))}\in\mathcal{M}:\nonumber\\
&\mathcal{X}_i(k)=\left\{\ba{l}(s,p)\in S\times\N:\\\mathcal{S}(s)^{(1,p-j)}=\mathcal{S}(s_i(k))^{(1,p_i(k)-j)},\\
j=0,\ldots,k-k_{\mathcal{S}(s_i(k))^{(1,p_i(k))}}\\
\land\,\mathcal{S}(s)^{(2,p)}=\mathcal{S}(s_i(k))^{(2,p_i(k))}\ea\right\}
\end{align}
\end{subequations}
where $k_{\mathcal{S}(s_i(k))^{(1,p_i(k))}}$ is the time step when part $i$ started the job of machine $m=\mathcal{S}(s_i(k))^{(1,p_i(k))}$ (compare \eqref{eq:input-temporal-logic}). Namely, each set $\mathcal{X}_i(k)$ contains all the pairs $(s,p)$ of sequence index and position index such that the  corresponding vector $\left[\mathcal{S}(s)^{(1,p)}\\\mathcal{S}(s)^{(2,p)}\right]^T$ corresponds to that of part $i$ at time $k$, also considering a possible ongoing job and its remaining duration, if $\mathcal{S}(s_i(k))^{(1,p_i(k))}$ is a machine node. These sets are never empty by construction, since they always include the current pair $(s_i(k),\,p_i(k))$. At any time $k$, exchanging these two components of the state $\bx_i(k)$ to any other pair $(s,p)\in\mathcal{X}_i(k)$ implies that we are allocating to part $i$ another sequence and/or position among those that are compatible with its current physical location and goal. For example, in this way it is possible to select one out of several parallel paths originating from a certain node in the plant, or to make a part wait or repeat several times a single loop in the plant, each time by shifting it back in a sequence that contains that loop only once. Our high-level predictive controller exploits precisely this feature, as described in the following algorithm. We denote with $X_{N_p(o|k)}(o|k),\,\bx(o|k)$ the predictions of plant input and Lagrangian states, respectively, computed at time $k$ and pertaining to time $k+o$.\\
$\,$\\
\textbf{Algorithm 3} \textit{Model Predictive Path Allocation}. 
\benum
\item[\textbf{1.}] At time $k$ acquire the state variables $\bx_i(k),\,i=1,\ldots,N_p(k)$ and compute the corresponding sets $\mathcal{X}_i(k)$;
\item[\textbf{2.}] Solve the following finite horizon optimal control problem (FHOCP):
\begin{subequations}\label{eq:FHOCP}
	\begin{gather}
	\min\limits_{(\sigma_i,\pi_i),\,i=1,\ldots,N_p(k)}
	\sum\limits_{o=0}^{N}\ell_{N_p(o|k)}\left(X_{N_p(o|k)}(o|k)\right)\label{eq:FHOCP-cost}\\
	\text{subject to}\nonumber\\
	\bx_i(0|k)=\left[\sigma_i,\pi_i,t_i(k)\right]^T,\,i=1,\ldots,N_p(k)\label{eq:FHOCP-each-state}\\
	\,\nonumber\\
	\ba{c}X_{N_p(0|k)}(0|k)=\\\left[\bx_1(0|k)^T,\ldots,\bx_{N_p(k)}(0|k)^T\right]^T\ea\label{eq:FHOCP-init-state}\\
	\,\nonumber\\
	\ba{c}X_{N_p(o+1|k)}(o+1|k)=\\
	f_{(N_p(o+1|k),N_p(o|k))}(X_{N_p(o|k)}(k),a(o|k)),\\
	o=0,\ldots,N-1
	\ea\label{eq:FHOCP-prediction}\\
	\,\nonumber\\
	(\sigma_i,\pi_i)\in\mathcal{X}_i(k),\,i=1,\ldots,N_p(k)\label{eq:FHOCP-constraint}
	\end{gather}
\end{subequations}
where $N\in\N$ is the prediction horizon, the sequence $a(o|k)\in\{0,1\},\,o=0,\ldots,N-1$ contains the predictions of new parts that need to be worked (if available), and the stage cost functions $\ell_{N_p}\left(X_{N_p}\right)$ are chosen by the designer according to the plant performance indicator of interest.
\item[\textbf{3.}] Let $(\sigma_i^*,\pi_i^*),\,i=1,\ldots,N_p(k)$ be the solution to \eqref{eq:FHOCP}. Compute the new state vectors $\bx^*_i(k)$ as:
\[
\ba{rcl}
\bx_i^*(k)&=&\left[\ba{c}\sigma_i^*\\\pi_i^*\\t_i(k)\ea\right],i=1,\ldots,N_p(k)\\
X^*_{N_p(k)}(k)&=&\left[\bx_1^*(k)^T,\ldots,\bx^*_{N_p(k)}(k)^T\right]^T
\ea
\]
and provide these values to \textbf{Algorithm 1} to compute the control inputs via \eqref{eq:fb-greedy}:
\[
U^*(k)=\kappa_{N_p(k)}(X^*_{N_p(k)}(k),a(k)).
\]
\item[\textbf{4.}] Apply to the plant the control inputs $U^*(k)$, set $k=k+1$, go to \textbf{1.}.
\eenum
The predictive control strategy defined by \textbf{Algorithm 3} is thus able to directly modify the state of each part that is fed to the path following strategy (see Fig. \ref{F:hierarchical}), ensuring consistency with its current positions and goal (constraint \eqref{eq:FHOCP-constraint}), in order to optimize the chosen performance index \eqref{eq:FHOCP-cost} on the basis of a prediction of the plant behavior under the greedy path following algorithm, see \eqref{eq:FHOCP-each-state}-\eqref{eq:FHOCP-prediction}. Note that the optimization variables $(\sigma_i,\pi_i),i=1,\ldots,N_p(k)$ pertain only to the current time step, i.e. the sequence index is not changed during the predictions. This clearly reduces the degrees of freedom of the solver, resulting in possible sub-optimality but gaining in computational efficiency, similarly to what is done in move blocking strategies in MPC, see e.g. \cite{CAGIENARD2007563}. On the other hand, being a receding horizon strategy, \textbf{Algorithm 3} is able to change the sequence and position indexes $s_i(k),p_i(k)$ of all states at each time step $k$, resulting in practice in good closed-loop performance. Signal $a(k)$, which is managed by the greedy path following algorithm as described in Section \ref{ss:path-following}, is considered as an external disturbance, of which a prediction may be available (otherwise one can simply set $a(o|k)=0$ in \eqref{eq:FHOCP-prediction}).\\
Regarding the choice of cost functions $\ell_{N_p}\left(X_{N_p}\right)$, possible examples include the sum, over all parts, of the remaining steps in their respective sequences (which favors plant throughput), plus the sum of non-zero control inputs (which favors energy saving). The design of cost functions accounting for different real-world requirements is subject of current research.\\ As regards the guaranteed closed-loop performance, the optimization problem \eqref{eq:FHOCP} is always feasible by construction, since to a minimum the controller can just leave sequence and position indexes unchanged, and plant constraints are always satisfied by the path following approach. On the other hand, a sensible question pertains to what we refer to as the \textit{lockout avoidance} property, i.e. the guarantee that the predictive approach always prevents occurrence of a lockout. Under mild assumptions on the chosen sequences and prediction horizon $N$,we can indeed prove that \textbf{Algorithm 3} guarantees lockout avoidance. This result and its proof are omitted here for the sake of brevity.
\begin{remark}\label{r:sequences} (Computation of node sequences) The performance of the closed-loop plant under the proposed hierarchical approach strongly depend on the pre-computed paths. The generation of these paths entails a trade-off between two conflicting aspects: on the one hand, a large number of comprehensive paths will provide the predictive controller with more degrees of freedom to accommodate more parts and reach higher performance, on the other hand a set of sequences that is too rich can lead to very high computational complexity, reducing the scalability of the proposed approach. In the numerical example presented in this paper, where each part has to visit the two machines one after the other, we adopted a manual selection based on physical insight, see Section \ref{s:example}. We plan to rigorously investigate the problem of optimal sequence computation and selection in the next future, adopting approaches from graph theory combined with closed-loop system analysis. 
\end{remark}

\section{Numerical results}\label{s:example}
We present the tests of the hierarchical approach on the small-scale example of Fig. \ref{F:example-scheme}, with $N_u=22$ boolean control inputs. The two machine nodes $11,\,12$ have the same processing time $L_{11}=L_{12}=3$ time steps, and each part must visit first machine 12, then machine 11 before leaving the plant from node 10, which is also the loading node. We assume that $a(k)=1\,\forall k$, i.e. a new part is loaded to the plant whenever the loading node is free, and that the predictive controller does not have this information. The maximum throughput of the plant depends on the processing time of machine 12 and on the fact that node 10 has to switch between loading a new part or unloading a finished one, thus adding two additional time steps. Its value is thus equal to $1/(L_{12}+2)=0.20$ parts per time step. We set a prediction horizon of $N=50$ time steps, and we use as stage cost in \eqref{eq:FHOCP-cost} the following function:
\be\label{eq:example_stage_cost}
\ell_{N_p(o|k)}=\sum\limits_{i=1}^{N_p(o|k)}r_i(o|k)+\beta\sum\limits_{i=1}^{N_p(o|k)}\sum\limits_{i=1}^{n_u}U(o|k)
\ee
where $\beta\geq0$ is a weighting factor, $r_i(o|k)$ is computed as in \eqref{eq:priority} considering the predicted Lagrangian states, and $U(o|k)=\kappa_{N_p(o|k)}(X^*_{N_p(o|k)}(o|k),0)$ is the vector of simulated actuation commands given to the plant. Function \eqref{eq:example_stage_cost} is thus the weighted sum of two objectives: the total number of remaining steps in the sequence assigned to each part, related to throughput maximization, and the total number of commanded inputs, related to energy minimization. As regards the sequence computation, since the considered example is essentially a series manufacturing process, we adopt here a single path, composed of redundant sub-sequences going several times through all possible loops across nodes $2,3,4,5,6,7$ (see Fig. \ref{F:example-scheme}) and of sub-sequences of identical values for each node, in order to provide the predictive controller with the option to make one part wait in place by shifting it back with such sub-sequences.
The criteria that we considered in the path generation are: the inclusion in each sequence of all the machines in the correct order, the inclusion of subsequences as required by Lemma \ref{lemma:feasibility}, and the inclusion in each sequence of a terminal sub-path leading to the outside of the plant (unloading node).\\
\begin{figure}[!htb]
	\centering
	\includegraphics[width=.8\columnwidth]{./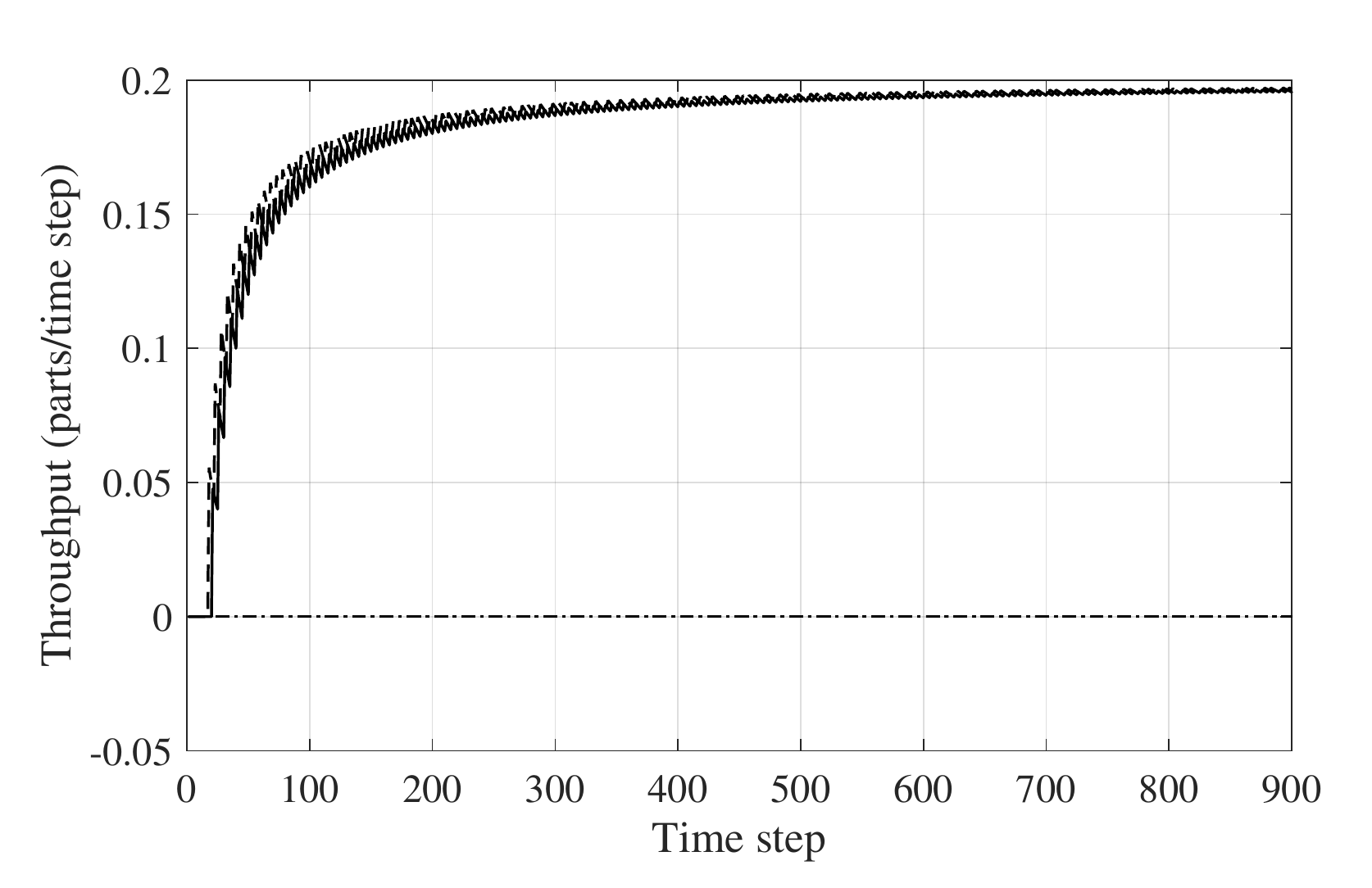}
	\caption{Simulation example. Course of the plant throughput expressed as number of finished parts per time step, with $\beta=5$ (dashed line), $\beta=6$ (solid), and $\beta=8$ (dash-dotted).}
	\label{F:throughput}
\end{figure}
\begin{figure}[!htb]
	\centering
	\includegraphics[width=.8\columnwidth]{./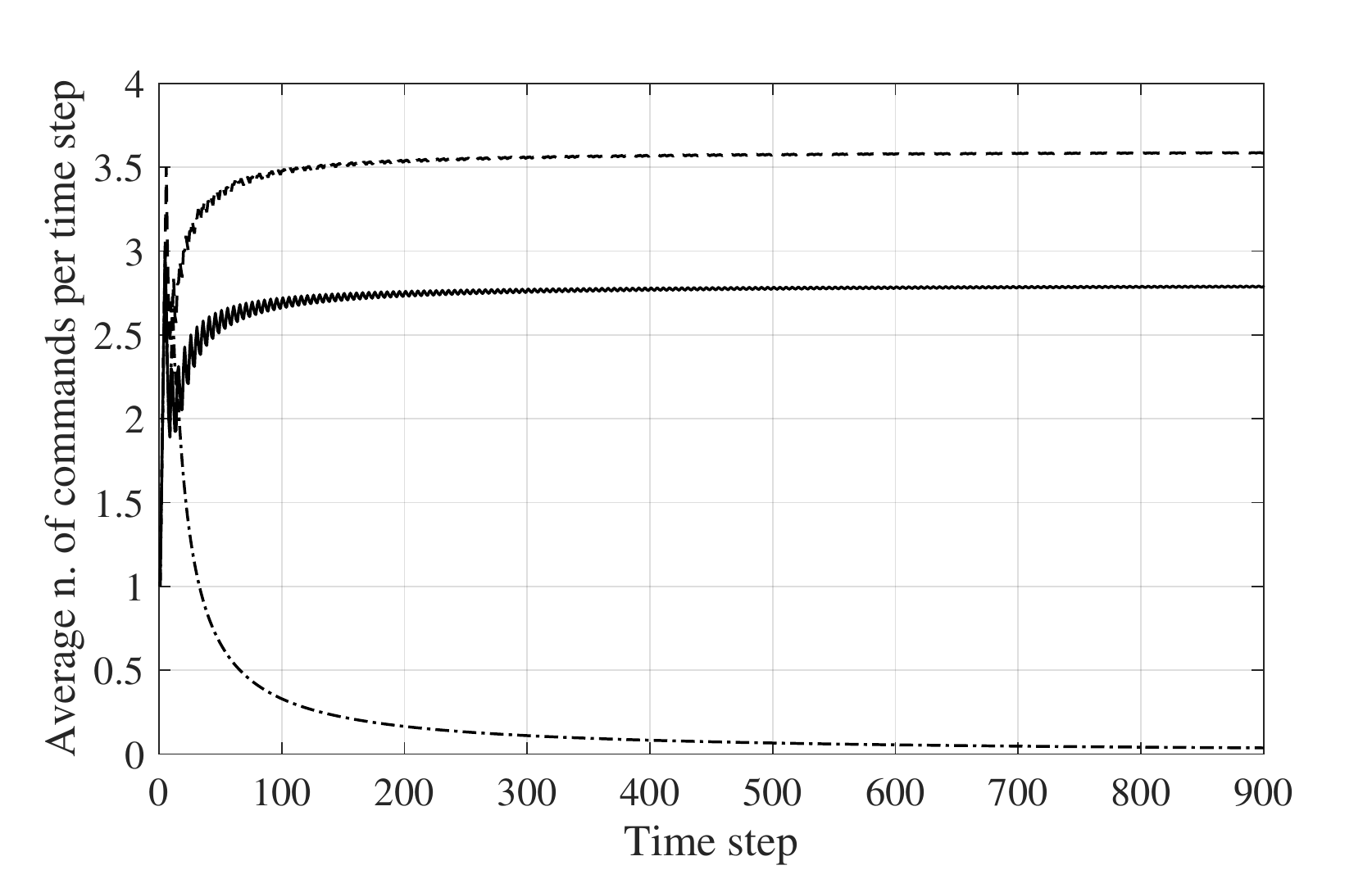}
	\caption{Simulation example. Course of the average number of input commands per time step, with $\beta=5$ (dashed line), $\beta=6$ (solid), and $\beta=8$ (dash-dotted).}
	\label{F:commands}
\end{figure}
We ran all the simulations starting from one part in node 10. In this example, the maximum throughput can be reached with different strategies that lead to different values of energy consumption: in fact, during each job it is possible to let the waiting parts be held on the nodes, or to make them circulate in the available loops within the plant. We illustrate that the proposed strategy switches between these two behaviors as the value of $\beta$ decreases. This is clearly visible in Figs. \ref{F:throughput}-\ref{F:commands}: with $\beta=5$ the plant reaches the maximum throughput and a number of commands per time step equal to 3.5, while with $\beta=6$ the same throughput is obtained with only 2.5 commands per time step, i.e. 30\% less. In both cases, a number of parts oscillating between 7 and 8 is present on the plant at each time step, after the initial transient. If we further increase $\beta$, the controller reaches a lockout with eight parts on the plant, since it becomes more convenient to avoid any actuation rather than to push the parts forward in their paths. This is also shown in Figs. \ref{F:throughput}-\ref{F:commands}.\\
Finally, regarding the computational aspects, we solved the problem \eqref{eq:FHOCP} via extensive search over all possible valid $(\sigma_i,\pi_i)$ pairs. On a Laptop with 8GB RAM and an Intel Core i7 CPU at 2.6 GHz running Matlab, the resulting computational time is 0.45$\,$s per time step, without any attempt to improve the solver efficiency (e.g. by parallelizing the computations and/or adopting a non-brute-force approach to solve the optimization problem). 
\section{Conclusions}\label{s:conclusions}
A new approach to the problem of routing parts in discrete manufacturing plants has been presented, adopting a Lagrangian modeling perspective and a hierarchical control structure. Simulation results on a small example illustrate the behavior of the closed loop system, which achieves the theoretical maximum throughput and allow one to optimize energy consumption, as measured by the number of actuated commands. The obtained computational times are very low for the considered application, also considering the rather large employed prediction horizon. This makes us confident about the scalability to larger plants. Next steps in this research are aimed to investigate the generation of optimal sequences, the derivation of theoretical guarantees about lockout avoidance, testing in scenarios with uncertain outcomes of each job and non-series manufacturing processes, and the experimental validation on a pilot plant.
\bibliographystyle{IEEEtran}

\end{document}